\def\BibTeX{{\rm B\kern-.05em{\sc i\kern-.025em b}\kern-.08em
    T\kern-.1667em\lower.7ex\hbox{E}\kern-.125emX}}
\newtheoremstyle{sltheorem}
{}                
{}                
{}        
{10pt}                
{\bfseries}       
{:}               
{ }               
{}                
\theoremstyle{sltheorem}
\newtheorem{theorem}{Theorem}
\newtheorem{definition}{Definition}
\newtheorem{corollary}{Corollary}
\begin{document}
\title{Reformulating Energy Storage Capacity Accreditation Problem with Marginal Reliability Impact}
\author{Qian Zhang,~\IEEEmembership{Graduate Student Member, IEEE,} Feng Zhao,~\IEEEmembership{Senior Member, IEEE,} Tongxin Zheng,~\IEEEmembership{Fellow, IEEE,} Le Xie,~\IEEEmembership{Fellow, IEEE}
\thanks{Qian Zhang and Le Xie are with John A. Paulson School of Engineering and Applied Sciences, Harvard University, Allston, MA, USA. Feng Zhao and Tongxin Zheng are with the ISO New England Inc., Holyoke, MA, USA. This work was initiated during Qian Zhang’s summer internship at ISO New England. (correspondence e-mail: qianzhang@g.harvard.edu).}}

\markboth{Journal of \LaTeX\ Class Files,~Vol.~18, No.~9, September~2020}%
{How to Use the IEEEtran \LaTeX \ Templates}

\maketitle

\begin{abstract}
To enhance the efficiency of capacity markets, many electricity markets in the U.S. are adopting or planning to implement marginal capacity accreditation reforms. This paper studies energy storage capacity accreditation using Marginal Reliability Impact (MRI), defined here as a directional derivative of expected unserved energy (EUE) with respect to the qualified-capacity path. We represent the commonly used reliability-dispatch rule for storage as a linear program, which enables direct MRI extraction from dual variables instead of repeated brute-force perturbation runs. The reformulation shows that the EUE value function is continuous and piecewise linear and that the storage MRI remains non-negative under the adopted reliability-dispatch adequacy model. We also report baseline validation against the perturbation benchmark and a focused sensitivity discussion on round-trip efficiency and time-step resolution. Numerical results from a modified California system confirm the dual-based MRI and provide practical guidance for qualified-capacity design and storage modeling in resource adequacy studies.
\end{abstract}

\begin{IEEEkeywords}
Capacity Market, Energy Storage, Capacity Accreditation, Marginal Reliability Impact
\end{IEEEkeywords}

\section{Introduction}
The Forward Capacity Market (FCM) ensures long-term resource adequacy in deregulated electricity markets by providing financial incentives for resources to commit their capacity in advance, thereby meeting reliability standards typically defined by a loss of load expectation (LOLE) of "one day in ten years." \cite{joskow2006competitive,cramton2013capacity}. However, the rapid integration of renewable energy and energy storage technologies has introduced new challenges to the conventional FCM framework due to the intermittent and variable nature of renewables and the unique operational characteristics of storage. In recent years, many Independent System Operators (ISOs) have been exploring or implementing new reforms targeting both the demand and supply sides of the FCM \cite{iso_ne_fcm_demand_curve,nyiso2016,pjm2022,miso_accreditation}. On the demand side, the capacity demand curve has been reconstructed from fixed capacity requirements or vertical lines to price-elastic demand curves based on the estimation of marginal reliability benefits of capacity and the value of loss load (VOLL) in FCM \cite{iso_ne_fcm_demand_curve,zhao2017constructing}. On the supply side, seasonal and locational factors have been incorporated into resource models to make the Resource Adequacy Assessment (RAA) process more precise, while capacity accreditation methods are being re-evaluated to appropriately account for the diverse contributions of different resources to system reliability \cite{pjm2022,zhao2020economic,miso_accreditation}. Additional reform suggestions from the academic community have also contributed by exploring innovative methods to address the complexities of an evolving resource mix \cite{bothwell2017crediting,wang2021crediting,zhang2023power,mays2023accreditation}. Together, these efforts aim to establish a framework where accredited capacity is more substitutable across resource types, thereby enhancing market efficiency and facilitating the transition to a decarbonized grid.

Among the various reform initiatives, marginal capacity accreditation has become a pivotal element in both demand and supply side reforms across most ISOs in the US. Unlike \emph{average} methods, which compensate each resource based on the aggregate reliability benefit of all units of that resource type, the \emph{marginal} approach compensates resources based on the incremental reliability benefit provided by the next unit of that type \cite{fengmri}. Reflecting these advantages, the Federal Energy Regulatory Commission (FERC) has approved marginal-based accreditation reforms for several ISOs in recent years \cite{ferc_nyiso,ferc_pjm}. In practice, two primary marginal accreditation metrics are widely used: marginal effective load carrying capability (ELCC) and MRI. Both are typically implemented via gradient-estimation procedures \cite{qiancredit}. This accreditation method is especially important for storage because its physical MW capability is not a homogeneous capacity product but depends on scarcity duration, charging opportunities, and the modeled resource mix.


Despite its advantages, the marginal accreditation methods present computational challenges as compared to class-level average accreditation, primarily due to the need to accurately measure the \emph{local reliability slope} of each resource's contribution to system reliability \cite{ISONE2022}. In current practice, this is usually done through perturbation analysis: the adequacy model is rerun after a small change in the accredited quantity, and the resulting finite difference is treated as MRI. This benchmark is computationally expensive and can also be sensitive to both perturbation step size and sample size \cite{qiancredit}. These issues are particularly pronounced for energy-limited resources, especially energy storage \cite{e3_2021,levin2023energy}.

With energy limited resources, load cannot be assumed met independently in each operating period. For energy storage, the ability to serve load in one time period is highly dependent on the operating conditions and energy availability from previous periods, complicating the marginal contribution calculation \cite{singh1988efficient,xu2013power,stephen2022impact}. The challenge is not only intertemporal but also structural: accreditation depends on both power and energy capability, so the marginal path must be stated carefully. Additionally, modeling appropriate charging and discharging orders of multiple storage resources remains a complex task \cite{evans2019minimizing,stephen2022impact}. These challenges not only increase the computational burden but also raise questions about the accuracy and fairness of marginal accreditation in accurately reflecting the reliability value of diverse resources.

In this paper, we address the energy storage capacity accreditation problem by reformulating it under the MRI framework. Rather than relying on perturbation results from commercial resource adequacy (RA) simulators, we represent the reliability-dispatch rule as a linear program, interpret MRI as a directional derivative, and then validate the resulting dual-based MRI against the perturbation benchmark. Connecting the storage dispatch rule to parametric optimization theory makes the local accreditation signal explicit and separates genuine modeling effects from numerical-step-size artifacts. The key contributions of this work are as follows: 
\begin{itemize} 
\item Enhancing the understanding of energy storage’s marginal reliability impact on system performance. 
\item Examining the influence of qualified capacity definitions and other factors on storage capacity accreditation, offering practical insights for system operators. 
\item Comparing storage capacity accreditation under different reliability criteria, offering valuable suggestion for policymakers in setting future reliability standards
\end{itemize}

The remainder of this paper is organized as follows: Section \ref{sec2} introduces the definition of MRI and its significance in the capacity market. Section \ref{sec3} provides an overview of different energy storage dispatch models in RA assessments, with a detailed explanation of the most commonly used reliability dispatch model. The main contributions of this paper are presented in Section \ref{sec4}, where the storage model is reformulated as an optimization problem to maximize system reliability. Section \ref{sec5} discusses key factors that may influence the reformulated model. Finally, Section \ref{sec6} demonstrates the efficiency of the proposed method through a modified California system, including comparative analysis.

\section{Marginal Reliability Impact} \label{sec2}
The MRI of a resource indicates how a small change in the resource's qualified capacity impacts system reliability, as measured by reliability metrics, such as the commonly used Expected Unserved Energy (EUE).
\subsection{MRI as a Rectifier for Qualified Capacity}

\begin{definition}[Marginal Reliability Impact \cite{fengmri}]
Let $q$ denote a one-dimensional accreditation path in the underlying storage-parameter space. We define the marginal reliability impact (MRI) as the local reduction in EUE induced by a marginal increase in Qualified Capacity (QC):
\begin{equation}
    \mathrm{MRI} = -\frac{\partial \mathrm{EUE}}{\partial \mathrm{QC}}.
\end{equation}
\end{definition}

Away from breakpoint values of the underlying parametric LP, the value function is affine in QC, and the MRI is an ordinary derivative. At breakpoint values, where the active constraint set changes, EUE remains piecewise linear but may fail to be differentiable. In such cases, we interpret MRI as the one-sided directional derivative along the accreditation path $q$, that is, the local slope of the relevant adjacent linear segment.

Throughout the paper, we focus on EUE, for which the mapping function is piecewise linear in QC and can fail to be differentiable only at turning points where a marginal change in capacity alters the set of shortage hours \cite{qiancredit}. In practice, derivatives are evaluated around the installed capacity requirement (ICR), which is determined by the system operator for the relevant delivery year. We therefore assume that the baseline evaluation point is \emph{regular}, meaning that it does not lie exactly at such a kink. This is a mild assumption in practice and is typically satisfied when the baseline system exhibits nonzero shortfall under the chosen reliability metric.

\begin{figure}[H]
\centering
\includegraphics[scale=0.75]{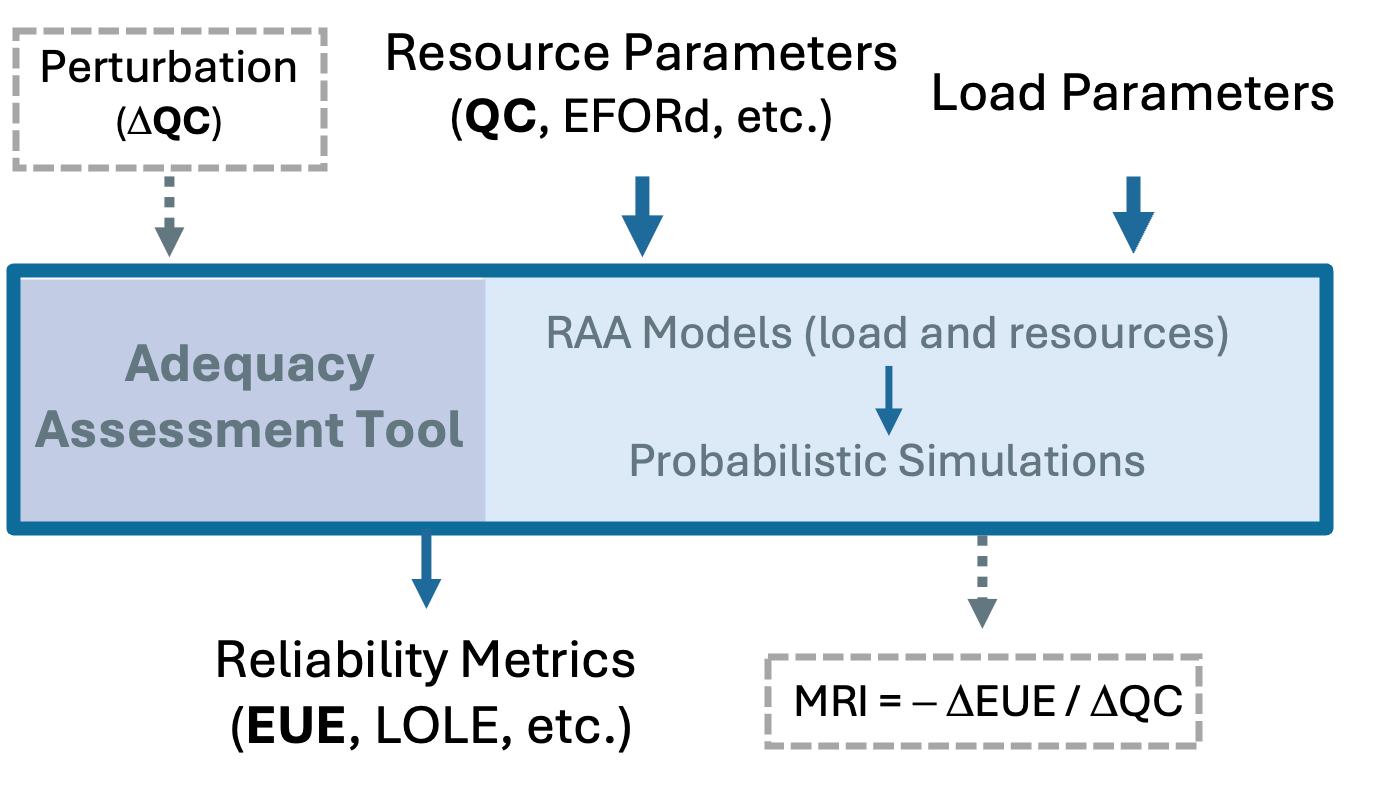}
  \caption{Sequential calculation process of reliability metrics and MRI}
  \label{mrical}
\end{figure}

The process for calculating reliability metrics and MRI is illustrated in Fig. \ref{mrical}, where the adequacy assessment tool is a probabilistic simulator utilizing the Monte Carlo method for resource adequacy analysis. The figure should be read as a sequence of repeated adequacy runs rather than as a single-step calculation: the baseline simulation first determines the reliability metric at the chosen ICR, after which either perturbation runs or the dual-based reformulation is used to recover the local MRI. To apply MRI for capacity accreditation, the MRI of an ideal resource is used as a reference.

\begin{definition}[relative Marginal Reliability Impact \cite{fengmri}] The relative Marginal Reliability Impact (rMRI) is defined as a ratio (unitless) of the resource's MRI to a perfect capacity's MRI:
\begin{equation}
    \text{rMRI} = \frac{\text{MRI}}{\text{MRI}_\text{perfect}}.
\end{equation}   
where the $\text{MRI}_\text{perfect}$ is typically defined as the MRI of a thermal generator resource with \emph{zero} outage rate. Other reference technologies are possible in principle, but the zero-outage thermal benchmark remains the standard normalization adopted in this paper. 
\end{definition}
After calculating the rMRI for each resource, the new credited capacity, i.e. Qualified MRI Capacity (QMRIC), will be calculated in MW:
\begin{equation}
    \text{QMRIC} = \text{QC} \times \text{rMRI}.
\end{equation}
The QC represents the \emph{physical} capacity of resources. For conventional generators this is typically a qualified MW capability close to nameplate power, while for renewable resources it is a derated accredited MW quantity based on modeled availability and performance. In contrast, the QMRIC reflects the \emph{market-accredited} quantity of a resource, which is directly used in the FCM clearing process.

\subsection{The Role of MRI in Forward Capacity Market}
The FCM functions as a market mechanism to ensure that power systems have adequate resources to meet future electricity demand. In ISO New England, system operators first determine the system-wide and zonal ICR based on expected system conditions three years prior to the operating period, which serves as the equilibrium point for MRI calculation. Subsequently, the QMRIC for the market is derived by discounting the QC with its rMRI. Using the estimated VOLL and resource bids, we can then construct the demand and supply curves for the FCM \cite{zhao2017constructing}. This process is illustrated in Fig. \ref{mrifcm}.
\begin{figure}[!t]
\centering
\includegraphics[scale=1]{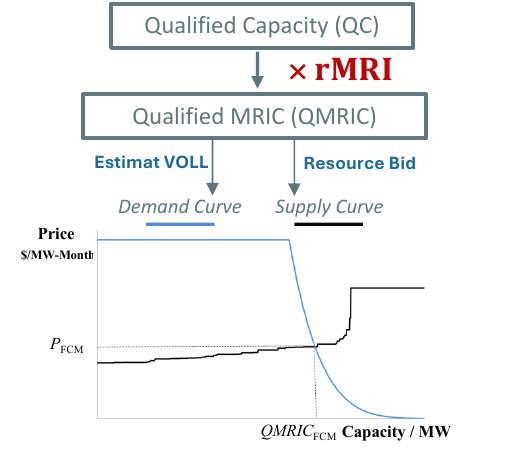}
  \caption{The Role of MRI in FCM}
  \label{mrifcm}
\end{figure}

Different from other \emph{average} reliability impact methods, the MRI-based approach provides more appropriate resource accreditation results and offers a clearer economic interpretation \cite{zhao2017constructing,zhao2020economic}. However, to ensure the efficiency of the FCM, certain properties of EUE and MRI are desired, particularly \emph{continuity} and \emph{non-negativity}, respectively. 

\section{Energy Storage Dispatch Model} \label{sec3}
In this section, we compare different storage dispatch models, with a focus on the \emph{most commonly} used model (reliability dispatch) for resource adequacy analysis.
\subsection{Overview of Energy Storage Dispatch Assumption}
Unlike thermal resources, the ability of energy storage to discharge in a given time period is highly dependent on the energy surplus from previous periods. As a result, the contribution of energy storage to system resource adequacy is determined primarily by the dispatch model rather than by the unit’s failure rate. Following the classification in Stephen et al. \cite{stephen2022impact}, dispatch models for multiple energy storage units can be classified into three categories. The first, \emph{expectation dispatch}, converts energy storage devices into non-chronological probabilistic models based on their historical behavior or average stochastic outcomes. Similar approaches are applied to intermittent resources and demand response, allowing system reliability to be evaluated using conventional methods \cite{Logan}.

However, under the expectation dispatch model, energy storage lacks adaptability to different operational scenarios, often resulting in an undervaluation of its resource adequacy contribution. The second category, \emph{economic dispatch} models, more accurately reflect energy storage behavior, but by the difficulty of modeling price-responsive and intertemporal user behavior, which requires strong and uncertain assumptions over long planning horizons.

To address the limitations of the previous models, \emph{reliability dispatch} models are used to quantify an adequacy-oriented benchmark, i.e., the reliability value of storage under the adopted RA dispatch rule rather than realized market behavior. These models are widely used by many ISOs for system resource adequacy analysis and storage capacity accreditation, which is the main focus of this paper.

\subsection{Reliability Dispatch}
\indent Assume each Monte Carlo experiment will sample a profile of available capacity during the observation period $\mathcal{T} := \{1,2,...,T\}$. The net power signal is used for storage charging decisions to capture the adaptability of energy storage under the reliability dispatch model, which is defined below. 
\begin{definition}[Net Power Signal]
Let $C_t(k)$ be a possible value of total available capacity except for energy storage during time interval $t$ for the $k$-th Monte Carlo experiment, where $N_p$ Monte Carlo experiment profiles are generated independently with equal probability contribution. The \emph{net power signal} is defined as the total available capacity except for energy storage minus the demand. The net power signal at time interval $t$ of the $k$-th Monte Carlo experiment can be expressed as:
\begin{equation}
P_t(k) = C_t(k) - L_t(k)    
\end{equation}
where $L_t(k)$ is the system demand level at time interval $t$ of $k$-th Monte Carlo experiment. Because the demand profiles are statistically independent, the final MRI used for capacity accreditation is computed as the average MRI across all profiles. In the following sections, we therefore restrict attention to a single profile, denoted by $P_t$.
\end{definition}

\indent Because resource adequacy is a long-term planning exercise, a key assumption of the energy storage dispatch model is that decision-makers possess only limited information about future system conditions. Consequently, rather than employing a multi-interval economic dispatch formulation, most dispatch models embedded in reliability assessment software adopt \emph{single-interval greedy algorithms}. Under this paradigm, storage devices charge from the grid whenever surplus generation is available ($P_t > 0$) and discharge only when required to mitigate a power deficit ($P_t < 0$). The rule is still chronological because the state of charge is propagated from one interval to the next; storage value is therefore path-dependent even though the dispatch decision is taken one interval at a time.

In most resource adequacy simulators, such as GE MARS and PRAS, storage that would be able to discharge the longest at their maximum rate is discharged first, and storage that would take the longest time to charge at their maximum charge rate is charged first \cite{GEMARS, stephen2021probabilistic}. This greedy dispatch algorithm is proven to minimize the expected unserved energy compared with any other storage dispatch methods \cite{evans2019minimizing}, and we will prove it in the next section.

To illustrate this dispatch model, we denote by $\mathcal{N}_s  \doteq \{D_1, D_2, ..., D_{N_s}\}$ the set of energy storage devices available in the system. Let $\overline{S}_i$ and $\overline{x}_i$ be the energy and power capacity of storage $D_i$, so that its state of charge (SoC) and power output during time interval $t$ satisfy $S_{i,t} \in [0, \overline{S}_i]$ and $x_{i,t} \in [-\overline{x}_i,\overline{x}_i]$, respectively. The initial SoC is predefined by system operator, typically at full charged state $S_{i,0} = \overline{S}_i$ \cite{isone_ge_mars_tech_session2_2025}.

\begin{algorithm}[H]
\caption{Greedy Algorithm in Energy Storage Reliability Dispatch (modified from \cite{evans2019minimizing})}\label{alg:dis}
\begin{algorithmic}[1]
\State \textbf{Input:} net power signal \(P_t\)
\State \textbf{Input:} initial remaining storage durations \(\{l_i(t-1)\}_{i \in \mathcal{N}_s}\)
\State \textbf{Input:} storage parameters \(\{\overline{x}_i,\overline{l}_i\}_{i \in \mathcal{N}_s}\)
\For{\(i \in \mathcal{N}_s\)}
    \State \(l_i'(t-1) = \max \{l_i(t-1) - 1, 0\}\)
    \State \(l_i''(t-1) = \min \{l_i(t-1) + 1, \overline{l}_i\}\)
\EndFor
\If{\(P_t < 0\)}
    \State Sort \(\{l_i(t-1), l_i'(t-1)\}_{i \in \mathcal{N}_s}\) in descending order
    \State Determine the reached remaining storage duration \(\underline{L}(t)\) from the discharge target \(-P_t\)
    \For{\(i \in \mathcal{N}_s\)}
        \State \(u_i(t) = \max\{\min\{l_i(t-1)-\underline{L}(t),1\},0\}\)
    \EndFor
\Else
    \State Sort \(\{l_i(t-1), l_i''(t-1)\}_{i \in \mathcal{N}_s}\) in ascending order
    \State Determine the reached remaining storage duration \(\overline{L}(t)\) from the charge target \(P_t\)
    \For{\(i \in \mathcal{N}_s\)}
        \State \(u_i(t) = -\max\{\min\{\overline{L}(t)-l_i(t-1),1,\overline{l}_i-l_i(t-1)\},0\}\)
    \EndFor
\EndIf
\State Compute the updated duration, SoC, and unserved energy:
\Statex \(l_i(t)=l_i(t-1)-u_i(t)\), \(S_{i,t}=\overline{x}_i l_i(t)\),
\Statex \(s(t)=\max\left\{-P_t-\sum_{i \in \mathcal{N}_s}\overline{x}_i u_i(t),0\right\}\)
\State \textbf{Output:} \(\{l_i(t),u_i(t)\}_{i \in \mathcal{N}_s}, s(t)\)
\end{algorithmic}
\end{algorithm}

\begin{definition}[Remaining Storage Duration] \label{def:remain_storage}
For storage device \(D_i\), the remaining storage duration is defined as the duration for which the device can continue discharging at its maximum power:
\begin{equation}
l_i(t)=\frac{S_{i,t}}{\overline{x}_i}.
\end{equation}
Similarly, the maximum storage duration of device \(D_i\) can be expressed as $\overline{l}_i(t)=\overline{S}_i / \overline{x}_i$.
\end{definition}

By Definition~\ref{def:remain_storage}, \(l_i(t)\) is the remaining storage duration of device \(i\). Algorithm~\ref{alg:dis} shows that, during a deficiency interval, reliability dispatch equalizes the post-dispatch remaining storage duration of the marginally discharged units subject to storage energy and power limits.

\section{MRI-based Storage Capacity Accreditation} \label{sec4}
For large-scale systems with energy storage resources, the reliability dispatch algorithms discussed above are integrated into Monte Carlo simulations for reliability analysis \cite{GEMARS}. As MRI calculations become increasingly important in supporting the FCM, we reformulate the MRI-based storage capacity accreditation problem to enhance both accuracy and computational efficiency. Our new reformulation can be interpreted as an analytical representation of this chronological reliability-dispatch rule, where uncertainty is captured by the Monte Carlo net-power trajectories.

\subsection{Reformulating Reliability Dispatch Problem}

For a fixed sampled net-power profile \(\{P_t\}_{t \in \mathcal{T}}\), let \(P_t^+ \doteq \max\{P_t,0\}\) and \(P_t^- \doteq \max\{-P_t,0\}\). We want to show that the Algorithm~\ref{alg:dis} used in storage reliability dispatch can achieve the optimal solution minimizing the unserved energy, which is formulated as $\mathcal{G}^\text{optimal}$:
\begin{equation} \label{esmodel2}
\begin{aligned}
\min_{S_{i,t},\,c_{i,t},\,d_{i,t},\,s(t)} \quad 
& \sum_{t=1}^{T} s(t) \\
\text{s.t.} \quad 
& \forall i \in \mathcal{N}_s,\quad \forall t \in \mathcal{T}: \\
& S_{i,0} = \text{constant}_i, \\
& S_{i,t} = S_{i,t-1} + c_{i,t} - d_{i,t},
&& (\phi_{it}) \\
& 0 \le S_{i,t} \le \overline{S}_i, 
&& (\mu_{it}^-, \mu_{it}^+) \\
& 0 \le c_{i,t} \le \overline{x}_i,
&& (\alpha_{it}^-, \alpha_{it}^+) \\
& 0 \le d_{i,t} \le \overline{x}_i,
&& (\beta_{it}^-, \beta_{it}^+) \\
& \sum_{i \in \mathcal{N}_s} c_{i,t} \le P_t^+,
&& (\kappa_t) \\
& P_t^- - \sum_{i \in \mathcal{N}_s} d_{i,t} \le s(t),
&& (\lambda_t) \\
& s(t) \ge 0,
&& (\gamma_t)
\end{aligned}
\end{equation}
Here \(c_{i,t}\) and \(d_{i,t}\) denote, respectively, the charged and discharged energy of storage device \(D_i\) over interval \(t\), measured in the same per-interval units as \(P_t\). After the duration change of variables \(S_{i,t}=\overline{x}_i l_i(t)\), \(c_{i,t}=\overline{x}_i c_i(t)\), and \(d_{i,t}=\overline{x}_i d_i(t)\), problem (\ref{esmodel2}) is converted exactly to the following full-horizon duration-domain formulation:
\begin{equation} \label{esmodel2t}
\begin{aligned}
\min_{l_i(t),\,c_i(t),\,d_i(t),\,s(t)} \quad
& \sum_{t=1}^{T} s(t) \\
\text{s.t.} \quad
& \forall i \in \mathcal{N}_s,\quad \forall t \in \mathcal{T}: \\
& l_i(t) = l_i(t-1) + c_i(t) - d_i(t), \\
& 0 \le l_i(t) \le \overline{l}_i, \\
& 0 \le c_i(t) \le 1, \\
& 0 \le d_i(t) \le 1, \\
& \sum_{i \in \mathcal{N}_s}\overline{x}_i c_i(t) \le P_t^+, \\
& P_t^- - \sum_{i \in \mathcal{N}_s}\overline{x}_i d_i(t) \le s(t), \\
& s(t) \ge 0.
\end{aligned}
\end{equation}
where \(c_i(t)\) and \(d_i(t)\) are the interval-wise charging and discharging durations, each bounded by one interval.

\begin{theorem}[Equivalence of Algorithm~\ref{alg:dis} and the Reformulated Optimal Dispatch Model $\mathcal{G}^\text{optimal}$] \label{thm:esequiv}
The unserved energy of storage reliability dispatch based on Algorithm~\ref{alg:dis} is no greater than that of \emph{any} other storage dispatch policy. In other words, it is an optimal solution within the optimal dispatch policy set \( \mathcal{G}^\text{optimal} \).
\end{theorem}

\begin{proof}
The proof follows the same primal-dual structure as \cite{qianramp}. The change of variables is exact, so it suffices to work with (\ref{esmodel2t}). Let \(\{l_i^*(t),u_i^*(t)\}_{i \in \mathcal{N}_s}\) and \(s^*(t)\) be the trajectory generated by Algorithm~\ref{alg:dis}. Define
\[
c_i^*(t)=\max\{-u_i^*(t),0\}, \qquad d_i^*(t)=\max\{u_i^*(t),0\}.
\]
We show that this greedy trajectory is primal feasible, construct a dual-feasible solution from it, and verify complementary slackness.

\subsubsection{Primal Feasibility}
\quad \\
\indent Let \(\underline{L}^*(t)\) and \(\overline{L}^*(t)\) denote the reached thresholds selected by Algorithm~\ref{alg:dis} on deficiency and surplus intervals, respectively. By the update formulas of Algorithm~\ref{alg:dis}, we have 
$0 \le d_i^*(t) \le \min\{1,l_i^*(t-1)\}$, and $
0 \le c_i^*(t) \le \min\{1,\overline{l}_i-l_i^*(t-1)\}$. Therefore, $l_i^*(t)=l_i^*(t-1)+c_i^*(t)-d_i^*(t)$, $0 \le l_i^*(t)\le \overline{l}_i$, and $0 \le c_i^*(t),d_i^*(t)\le 1$.
Moreover, the reached charging threshold is chosen so that
\[
\sum_{i \in \mathcal{N}_s}\overline{x}_i c_i^*(t)\le P_t^+.
\]
The algorithm defines $s^*(t)=\max\{P_t^- - \sum_{i \in \mathcal{N}_s}\overline{x}_i d_i^*(t),0\}$, which implies
\[
P_t^- - \sum_{i \in \mathcal{N}_s}\overline{x}_i d_i^*(t)\le s^*(t), \qquad s^*(t)\ge 0.
\]
Hence the greedy trajectory is primal feasible for (\ref{esmodel2t}).

\subsubsection{The Dual Problem}
\quad \\
\indent The dual of (\ref{esmodel2t}) is
\begin{subequations} \label{esdual}
\begin{align}
\max_{\phi,\mu,\alpha,\beta,\kappa,\lambda,\gamma} \quad
& - \sum_{i \in \mathcal{N}_s}\phi_{i1} l_i(0)
- \sum_{i \in \mathcal{N}_s}\sum_{t=1}^{T}\mu_{it}^+ \overline{l}_i \nonumber \\
 - &\sum_{i \in \mathcal{N}_s}\sum_{t=1}^{T}(\alpha_{it}^+ + \beta_{it}^+)
- \sum_{t=1}^{T}\kappa_t P_t^+ + \sum_{t=1}^{T}\lambda_t P_t^- \\
\text{s.t.} \quad
& \forall i \in \mathcal{N}_s,\ \forall t \in \mathcal{T}: \nonumber \\
& -\mu_{it}^- + \mu_{it}^+ + \phi_{it} - \phi_{i,t+1} = 0, \label{esdual_state} \\
& -\phi_{it} - \alpha_{it}^- + \alpha_{it}^+ + \kappa_t \overline{x}_i = 0, \label{esdual_charge} \\
& \phi_{it} - \beta_{it}^- + \beta_{it}^+ - \lambda_t \overline{x}_i = 0, \label{esdual_discharge} \\
& 1 - \lambda_t - \gamma_t = 0, \label{esdual_shortage} \\
& \mu_{it}^\pm,\alpha_{it}^\pm,\beta_{it}^\pm,\kappa_t,\lambda_t,\gamma_t \ge 0, \label{esdual_nonnegativity}
\end{align}
\end{subequations}
with boundary convention \(\phi_{i,T+1}=0\). The multiplier \(\phi_{it}\) is free in sign and represents the continuation value of one more unit of remaining storage duration carried from interval \(t\) to \(t+1\).

\subsubsection{Constructing the Dual Solution from Algorithm~\ref{alg:dis}}
\quad \\
\indent We construct the dual variables backward in time. For convenience define the scaled continuation values
\[
\theta_{it} \doteq \frac{\phi_{it}}{\overline{x}_i}.
\]
For \(t=1,\dots,T+1\), the recursive construction will maintain the invariant
\begin{equation} \label{esdualorder}
0 \le \theta_{it}\le 1, \qquad
l_i^*(t-1)\le l_j^*(t-1) \Rightarrow \theta_{it}\ge \theta_{jt}.
\end{equation}
That is, a shorter remaining storage duration carries a weakly larger continuation value per MW. The boundary condition \(\phi_{i,T+1}=0\) satisfies (\ref{esdualorder}) trivially.

Fix an interval \(t\), and suppose \(\{\phi_{i,t+1}\}_{i \in \mathcal{N}_s}\) has already been constructed so that (\ref{esdualorder}) holds for the state \(l_i^*(t)\). We now construct the interval-\(t\) dual multipliers directly from the greedy solution. Algorithm~\ref{alg:dis} implies the threshold relations
\[
\begin{aligned}
&\text{If} \quad P_t<0:\quad
d_i^*(t)=0 \Rightarrow l_i^*(t)\le \underline{L}^*(t), \\
&0<d_i^*(t)<1 \Rightarrow l_i^*(t)=\underline{L}^*(t), \quad d_i^*(t)=1 \Rightarrow l_i^*(t)\ge \underline{L}^*(t), \\
&\text{If} \quad P_t\ge 0:\quad
c_i^*(t)=0 \Rightarrow l_i^*(t)\ge \overline{L}^*(t),\\
& 0<c_i^*(t)<1 \Rightarrow l_i^*(t)=\overline{L}^*(t),\quad c_i^*(t)=1 \Rightarrow l_i^*(t)\le \overline{L}^*(t).
\end{aligned}
\]
The threshold-search step also has the exact-hit property: on a deficiency interval, either \(s^*(t)>0\) or \(\sum_i \overline{x}_i d_i^*(t)=P_t^-\); on a surplus interval, either all feasible charging headroom is exhausted or \(\sum_i \overline{x}_i c_i^*(t)=P_t^+\).

\textbf{Case 1: \(P_t<0\).} If \(s^*(t)>0\), set \(\lambda_t=1\). If \(s^*(t)=0\), choose \(\lambda_t \in [0,1]\) as the common scaled continuation value of units with \(l_i^*(t)=\underline{L}^*(t)\); if no unit ends exactly at \(\underline{L}^*(t)\), choose any value between the largest scaled continuation value on the set \(\{i:l_i^*(t)>\underline{L}^*(t)\}\) and the smallest scaled continuation value on the set \(\{i:l_i^*(t)<\underline{L}^*(t)\}\). Then $d_i^*(t)=0 \Rightarrow \theta_{i,t+1}\ge \lambda_t$, $0<d_i^*(t)<1 \Rightarrow \theta_{i,t+1}= \lambda_t$, and $d_i^*(t)=1 \Rightarrow \theta_{i,t+1}\le \lambda_t$.
Such a choice exists by the threshold relations above and (\ref{esdualorder}), using endpoint values when one side is empty. Then define
\[
\mu_{it}^+=0,\quad
\mu_{it}^-=
\begin{cases}
(\lambda_t-\theta_{i,t+1})\overline{x}_i, & l_i^*(t)=0,  \theta_{i,t+1}\le \lambda_t,\\
0, & \text{otherwise,}
\end{cases}
\]
\[
\phi_{it}=\phi_{i,t+1}+\mu_{it}^-,\qquad \theta_{it}=\frac{\phi_{it}}{\overline{x}_i},
\]
\[
\beta_{it}^-=
\begin{cases}
(\theta_{it}-\lambda_t)\overline{x}_i, & d_i^*(t)=0,\\
0, & \text{otherwise,}
\end{cases} \]
\[
\beta_{it}^+=
\begin{cases}
(\lambda_t-\theta_{it})\overline{x}_i, & d_i^*(t)=1,\\
0, & \text{otherwise.}
\end{cases}
\]
Finally set $\kappa_t=\max_{j \in \mathcal{N}_s}\theta_{jt}$, $\alpha_{it}^-=(\kappa_t-\theta_{it})\overline{x}_i$, $\alpha_{it}^+=0$, and $\gamma_t=1-\lambda_t$.

\textbf{Case 2: \(P_t\ge 0\).} If all feasible charging headroom is exhausted before using all surplus, set \(\kappa_t=0\). Otherwise choose \(\kappa_t \in [0,1]\) as the common scaled continuation value of units with \(l_i^*(t)=\overline{L}^*(t)\); if no unit ends exactly at \(\overline{L}^*(t)\), choose any value between the largest scaled continuation value on the set \(\{i:l_i^*(t)>\overline{L}^*(t)\}\) and the smallest scaled continuation value on the set \(\{i:l_i^*(t)<\overline{L}^*(t)\}\). Then $c_i^*(t)=0 \Rightarrow \theta_{i,t+1}\le \kappa_t$, $0<c_i^*(t)<1 \Rightarrow \theta_{i,t+1}= \kappa_t$, and $c_i^*(t)=1 \Rightarrow \theta_{i,t+1}\ge \kappa_t$.
Again such a choice exists by the threshold relations and (\ref{esdualorder}). Then define
\[
\mu_{it}^-=0,\qquad
\mu_{it}^+=
\begin{cases}
(\theta_{i,t+1}-\kappa_t)\overline{x}_i, & l_i^*(t)=\overline{l}_i,\theta_{i,t+1}\ge \kappa_t,\\
0, & \text{otherwise,}
\end{cases}
\]
\[
\phi_{it}=\phi_{i,t+1}-\mu_{it}^+,\qquad \theta_{it}=\frac{\phi_{it}}{\overline{x}_i},
\]
\[
\alpha_{it}^-=
\begin{cases}
(\kappa_t-\theta_{it})\overline{x}_i, & c_i^*(t)=0,\\
0, & \text{otherwise,}
\end{cases}\]
\[
\alpha_{it}^+=
\begin{cases}
(\theta_{it}-\kappa_t)\overline{x}_i, & c_i^*(t)=1,\\
0, & \text{otherwise.}
\end{cases}
\]
Finally set $\lambda_t=0$, $\gamma_t=1$, $\beta_{it}^-=\phi_{it}$, and $\beta_{it}^+=0$.

In both cases, \(\theta_{it}\) is either unchanged or clipped to the threshold value \(\lambda_t\) or \(\kappa_t\) on units whose post-dispatch state hits \(0\) or \(\overline{l}_i\). Since \(0 \le \theta_{i,t+1}\le 1\) and \(\lambda_t,\kappa_t \in [0,1]\), we also have \(0 \le \theta_{it}\le 1\). Moreover, on deficiency intervals $l_i^*(t)<\underline{L}^*(t) \Rightarrow \theta_{it}\ge \lambda_t$, $l_i^*(t)=\underline{L}^*(t) \Rightarrow \theta_{it}= \lambda_t$,$l_i^*(t)>\underline{L}^*(t) \Rightarrow \theta_{it}\le \lambda_t$, and on surplus intervals $l_i^*(t)<\overline{L}^*(t) \Rightarrow \theta_{it}\ge \kappa_t$, $l_i^*(t)=\overline{L}^*(t) \Rightarrow \theta_{it}= \kappa_t$, $l_i^*(t)>\overline{L}^*(t) \Rightarrow \theta_{it}\le \kappa_t$.
Hence the order relation in (\ref{esdualorder}) is preserved from \(t+1\) to \(t\), so the backward recursion is well defined from \(t=T\) to \(t=1\).

\begin{figure*}[t]
\vspace{-1.2em}
\begin{equation} \label{lag}
\begin{aligned}
&L(\boldsymbol{S},\boldsymbol{c},\boldsymbol{d},\boldsymbol{s},\boldsymbol{\phi},
\boldsymbol{\mu},\boldsymbol{\alpha},\boldsymbol{\beta},\boldsymbol{\kappa},
\boldsymbol{\lambda},\boldsymbol{\overline{S}},\boldsymbol{\overline{x}})
= \\
& V_2(\boldsymbol{s})
+ \sum_{i=1}^{N_s}\sum_{t=1}^{T}\phi_{it}(S_{i,t}-S_{i,t-1}-c_{i,t}+d_{i,t})
- \sum_{i=1}^{N_s}\sum_{t=1}^{T} \mu_{it}^- S_{i,t}
+ \sum_{i=1}^{N_s}\sum_{t=1}^{T} \mu_{it}^+ (S_{i,t}-\overline{S}_i) \\
&- \sum_{i=1}^{N_s}\sum_{t=1}^{T} \alpha_{it}^- c_{i,t}
+ \sum_{i=1}^{N_s}\sum_{t=1}^{T} \alpha_{it}^+ (c_{i,t}-\overline{x}_i)
- \sum_{i=1}^{N_s}\sum_{t=1}^{T} \beta_{it}^- d_{i,t}
+ \sum_{i=1}^{N_s}\sum_{t=1}^{T} \beta_{it}^+ (d_{i,t}-\overline{x}_i) \\
&+ \sum_{t=1}^{T}\kappa_t \Bigl(\sum_{i=1}^{N_s}c_{i,t} - P_t^+\Bigr)
+ \sum_{t=1}^{T}\lambda_t \Bigl(P_t^- - \sum_{i=1}^{N_s}d_{i,t} - s(t)\Bigr)
- \sum_{t=1}^{T}\gamma_t s(t)
\end{aligned}
\end{equation}
\hrulefill
\end{figure*}

\subsubsection{Verification of Dual Feasibility and Optimality}
\quad \\
\indent By construction, all dual variables except \(\phi_{it}\) are nonnegative. Equation (\ref{esdual_state}) holds because \(\phi_{it}=\phi_{i,t+1}+\mu_{it}^- - \mu_{it}^+\). Equation (\ref{esdual_charge}) holds by the definitions of \(\alpha_{it}^\pm\) and \(\kappa_t\), and (\ref{esdual_discharge}) holds by the definitions of \(\beta_{it}^\pm\) and \(\lambda_t\). Equation (\ref{esdual_shortage}) holds because \(\gamma_t=1-\lambda_t\). Therefore the constructed dual point is dual feasible.

Complementary slackness also follows directly from the construction. A positive \(\mu_{it}^-\) is assigned only when \(l_i^*(t)=0\), a positive \(\mu_{it}^+\) only when \(l_i^*(t)=\overline{l}_i\), a positive \(\alpha_{it}^+\) only when \(c_i^*(t)=1\), a positive \(\beta_{it}^+\) only when \(d_i^*(t)=1\), a positive \(\alpha_{it}^-\) only when \(c_i^*(t)=0\), and a positive \(\beta_{it}^-\) only when \(d_i^*(t)=0\). Moreover, $s^*(t)\gamma_t=0$, because \(s^*(t)>0\) implies \(\lambda_t=1\) and hence \(\gamma_t=0\). For the aggregate interval constraints,
\[
\lambda_t\Bigl(s^*(t)-P_t^-+\sum_{i \in \mathcal{N}_s}\overline{x}_i d_i^*(t)\Bigr)=0,
\]
because \(s^*(t)\) is defined from the shortage balance and, when \(s^*(t)=0\), the exact-hit property gives \(\sum_i \overline{x}_i d_i^*(t)=P_t^-\) whenever \(\lambda_t>0\). Likewise,
\[
\kappa_t\Bigl(P_t^+-\sum_{i \in \mathcal{N}_s}\overline{x}_i c_i^*(t)\Bigr)=0,
\]
because either the charging target is hit exactly or all feasible charging headroom is exhausted, in which case we set \(\kappa_t=0\).

We have therefore constructed a primal-feasible greedy trajectory and a dual-feasible multiplier vector satisfying complementary slackness for (\ref{esmodel2t}). By strong duality of linear programming, the greedy trajectory generated by Algorithm~\ref{alg:dis} is an optimal solution of (\ref{esmodel2}) and (\ref{esmodel2t}).
\end{proof}

\subsection{Storage Capacity Accreditation with MRI}

Theorem~\ref{thm:esequiv} shows that Algorithm~\ref{alg:dis} and the full-horizon LP (\ref{esmodel2}) are equivalent on every sampled trajectory. This equivalence allows MRI to be recovered directly from dual variables of (\ref{esmodel2}) rather than from repeated perturbation studies. Let \(V_2 := \sum_{t=1}^{T} s(t)\) denote the objective function value of (\ref{esmodel2}). The Lagrangian of (\ref{esmodel2}), parameterized by \(\overline{S}_i\) and \(\overline{x}_i\), is shown in \eqref{lag}.

By the \emph{Envelope Theorem}, the derivative of \(V_2\) with respect to the storage parameters can be read directly from the optimal dual multipliers:

\begin{equation}
\begin{aligned}
&\frac{\partial V_2}{\partial \overline{S}_i} = -\sum_{t=1}^{T} \mu_{it}^+\\
&\frac{\partial V_2}{\partial \overline{x}_i} = -\sum_{t=1}^{T} (\alpha_{it}^+ +\beta_{it}^+)
\end{aligned}
\end{equation}

\begin{theorem}[Non-negative MRI and Piecewise linear EUE: Multiple Energy Storage] \label{multithe}
If the QC for storage device $D_i$ is defined as either storage energy capacity $\overline{S}_i$ or power capacity $\overline{x}_i$ (or any positive linear combination thereof), the MRI of each storage is non-negative and the EUE is piecewise linear (both with respect to QC), regardless of system condition.
\end{theorem}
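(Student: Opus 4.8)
The plan is to follow the architecture of the proof of Theorem~\ref{singlethe} and extend every step to the coupled problem (\ref{esmodel2}), treating the additional storage devices as extra decision variables and the aggregate cross-charging constraints $(\lambda_{7t},\lambda_{8t})$ as parameter-free linear rows. I would first record feasibility for every $\overline{S}_i,\overline{x}_i\ge 0$: the trajectory returned by the reliability dispatch of Algorithm~\ref{alg:dis} is an optimal point of (\ref{esmodel2}) \cite{evans2019minimizing}, so the optimal value $V_2$ is well defined. The key structural observation, which I would establish before anything else, is that on the feasible region the nominally concave minimand is in fact linear: during a surplus interval ($p_t^+>0,\,p_t^-=0$) the constraints force $\sum_i(S_{i,t}-S_{i,t-1})\ge 0$ so $\min\{0,\sum_i(S_{i,t}-S_{i,t-1})\}=0$, whereas during a deficit interval the constraints force $\sum_i(S_{i,t}-S_{i,t-1})\le 0$ so the term equals $\sum_i(S_{i,t}-S_{i,t-1})$. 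Hence (\ref{esmodel2}) is equivalent to a genuine linear program obtained by replacing the objective with the linear expression $\sum_{t:\,p_t^->0}\sum_i(S_{i,t}-S_{i,t-1})$, in which the parameters $\overline{S}_i,\overline{x}_i$ appear linearly and only in the right-hand side.

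Given this LP reformulation, non-negativity of the MRI is immediate. Fixing $\text{QC}=\overline{S}_i$, the parameter enters only through $S_{i,t}\le\overline{S}_i$, i.e. $+\lambda_{i,2t}(S_{i,t}-\overline{S}_i)$ in the Lagrangian; KKT for a minimization with $\le$-constraints forces $\lambda_{i,2t}\ge 0$, so the envelope identity already derived gives $\text{MRI}_i=-\partial\text{EUE}/\partial\overline{S}_i=\sum_t\lambda_{i,2t}\ge 0$. The identical argument with $\lambda_{i,3t},\lambda_{i,4t}\ge 0$ covers $\text{QC}=\overline{x}_i$, and for a positive combination $a\overline{S}_i+b\overline{x}_i$ the directional derivative is the corresponding non-negative combination of these multipliers. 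A multiplier-free sanity check is that increasing any $\overline{S}_i$ or $\overline{x}_i$ only enlarges the feasible set, so $V_2$ cannot increase.

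For the piecewise-linear claim I would invoke the parametric-LP value-function theory \cite{bertsimas1997introduction} exactly as in Theorem~\ref{singlethe}: the optimal value of an LP whose right-hand side is affine in a scalar parameter is the pointwise minimum, over the finitely many candidate bases, of affine functions of that parameter, hence piecewise linear, convex, and continuous. The parameter-free coupling rows $(\lambda_{7t},\lambda_{8t})$ merely add constraints and do not affect this conclusion; restricting to the ray $a\overline{S}_i+b\overline{x}_i$ is an affine reparametrization that preserves piecewise linearity and convexity. Averaging over the finitely many Monte Carlo profiles then preserves both properties, as noted after Corollary~\ref{coro1}.

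The main obstacle is the first step rather than the convexity bookkeeping: one must verify that the $\min\{0,\cdot\}$ objective genuinely linearizes on the feasible set in the coupled setting, since a generic concave minimand admits no LP epigraph reformulation. This hinges on showing the aggregate constraints $(\lambda_{7t},\lambda_{8t})$, together with the per-device constraints $(\lambda_{i,5t},\lambda_{i,6t})$, pin the sign of $\sum_i(S_{i,t}-S_{i,t-1})$ interval by interval. A secondary subtlety is confirming that the coupling multipliers $\lambda_{7t},\lambda_{8t}$ do not enter $\partial V_2/\partial\overline{S}_i$; this follows because $\overline{S}_i$ appears in exactly one constraint family, so the envelope theorem isolates $\lambda_{i,2t}$, and at kinks of the value function the one-sided derivatives remain non-negative because every active multiplier is.
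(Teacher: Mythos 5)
Your proposal is correct, and its overall architecture coincides with the paper's: the paper proves Theorem \ref{multithe} by simply deferring to the argument for Theorem \ref{singlethe}, namely envelope-theorem differentiation of the Lagrangian combined with non-negativity of the multipliers of the $S_{i,t}\le\overline{S}_i$ and ramp constraints (for non-negative MRI), followed by a parametric-LP, finite-bases, minimum-of-affine-functions argument (for piecewise-linear, convex, continuous EUE). The one place you genuinely depart from the paper --- and improve on it --- is the justification that (\ref{esmodel2}) can be treated as a linear program at all. The paper asserts that the objective $\sum_t \min\{0,\cdot\}$ ``can be reformulated as a linear program via standard auxiliary-variable techniques,'' but that claim is not valid as stated: epigraph and auxiliary-variable tricks linearize the minimization of \emph{convex} piecewise-linear functions, whereas $\min\{0,\cdot\}$ is concave. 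Your linearization lemma --- that the constraints $(\lambda_{i,5t},\lambda_{i,6t})$ and $(\lambda_{7t},\lambda_{8t})$ pin the sign of $S_{i,t}-S_{i,t-1}$, hence of $\sum_i (S_{i,t}-S_{i,t-1})$, to be $\ge 0$ in surplus intervals and $\le 0$ in deficit intervals, so that on the feasible set the objective equals the linear form $\sum_{t:\,p_t^->0}\sum_i(S_{i,t}-S_{i,t-1})$ --- is exactly what makes the LP reformulation legitimate, and it is consistent with how the paper itself originally derived the $\min\{0,\cdot\}$ objective from the unserved-energy expression. With that lemma in place, your multiplier-sign argument, the observation that the parameter $\overline{S}_i$ appears in only one constraint family (so the coupling multipliers $\lambda_{7t},\lambda_{8t}$ drop out of the envelope formula), and the finite-bases conclusion all match the paper's reasoning; the monotonicity sanity check and the remark on one-sided derivatives at kinks of the value function are additional rigor the paper does not supply.
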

\begin{proof}
\indent Problem (\ref{esmodel2}) is a parametric linear program, and Theorem~\ref{thm:esequiv} shows that it is equivalent to the full chronological dispatch generated by Algorithm~\ref{alg:dis}. Increasing any individual \(\overline{S}_i\) or \(\overline{x}_i\) only relaxes the feasible set while preserving feasibility, so the optimal EUE cannot increase. Therefore,
\begin{equation}
\text{MRI}_{\overline{S}_i} = -\frac{\partial \mathrm{EUE}}{\partial \overline{S}_i}
= -\frac{\partial V_2}{\partial \overline{S}_i}
= \sum_{t=1}^{T}\mu_{it}^+ \ge 0,
\end{equation}
and the same monotonicity argument applies to \(\overline{x}_i\) because
\begin{equation}
\text{MRI}_{\overline{x}_i} = -\frac{\partial \mathrm{EUE}}{\partial \overline{x}_i}
= \sum_{t=1}^{T}(\alpha_{it}^+ + \beta_{it}^+) \ge 0.
\end{equation}

To establish the piecewise-linear property, note that after introducing the auxiliary variables \(s(t)\), problem (\ref{esmodel2}) is a parametric linear program in which \(\overline{S}_i\) and \(\overline{x}_i\) enter linearly through the right-hand sides of the bounds. Standard parametric-LP results imply that the optimal value is continuous and piecewise affine on a finite partition of the parameter space \cite{bertsimas1997introduction}. Hence the corresponding EUE is continuous and piecewise linear with respect to each QC definition.

\indent At breakpoints, the derivative need not be unique. In those cases, the optimal dual multipliers still provide valid directional derivatives or LP subgradients, which is the interpretation of MRI adopted in this paper.
\end{proof}

\begin{corollary}[Direct Calculation of MRI] 
\label{coro2}
If the QC for storage device $D_i$ is defined as either storage energy capacity $\overline{S}_i$ or power capacity $\overline{x}_i$, the MRI of each storage can be directly calculated based on the value of Lagrangian multipliers in (\ref{esmodel2}).
\end{corollary}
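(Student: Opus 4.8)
The plan is to mirror the three-part structure of the proof of Theorem \ref{singlethe}, adapting each step to the coupled multi-device program (\ref{esmodel2}). First I would establish feasibility and boundedness: feasibility follows because the trajectory produced by Algorithm \ref{alg:dis} is an optimal (hence feasible) solution of (\ref{esmodel2}), exactly as in the single-storage case, and boundedness is immediate because every $S_{i,t}$ lives in the compact box $[0,\overline{S}_i]$ and the objective is finite there. This guarantees that optimal primal and dual solutions exist, so the Lagrange multipliers invoked below are well defined.

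Second, for the non-negativity of the MRI, I would invoke the Envelope-Theorem identities already derived above. Taking $\text{QC}=\overline{S}_i$ for a fixed device $D_i$, we have $\text{MRI}_i = -\partial \text{EUE}/\partial \overline{S}_i = -\partial V_2/\partial\overline{S}_i = \sum_{t=1}^{T}\lambda_{i,2t} \ge 0$, where non-negativity is immediate because the $\lambda_{i,2t}$ are multipliers of inequality constraints $S_{i,t}\le\overline{S}_i$. The identical argument with $\sum_{t=1}^{T}(\lambda_{i,3t}+\lambda_{i,4t})$ handles $\text{QC}=\overline{x}_i$, and any positive linear combination of the two QC definitions yields a positive combination of non-negative quantities, hence is again non-negative. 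The observation making this transfer clean is that the parameter $\overline{S}_i$ (resp. $\overline{x}_i$) appears only in the per-device constraint carrying multiplier $\lambda_{i,2t}$ (resp. $\lambda_{i,3t},\lambda_{i,4t}$); the cross-charging constraints carrying $\lambda_{i,5t},\lambda_{i,6t}$ and the aggregate constraints carrying $\lambda_{7t},\lambda_{8t}$ are parameter-free, so the envelope derivative picks up no extra terms.

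Third, for piecewise linearity of the EUE, I would first argue that the apparently nonlinear objective $\sum_{t=1}^{T}\min\{0,\sum_{i=1}^{N_s}(S_{i,t}-S_{i,t-1})\}$ collapses to a genuine linear objective on the feasible polytope. Using the mutual exclusivity of $p_t^+$ and $p_t^-$ together with the aggregate constraints carrying $\lambda_{7t},\lambda_{8t}$, one shows that $\sum_i(S_{i,t}-S_{i,t-1})\ge 0$ during every surplus period and $\le 0$ during every deficiency period; hence the $t$-th summand equals $0$ on surplus periods and equals $\sum_i(S_{i,t}-S_{i,t-1})$ on deficiency periods, so the objective reduces to $\sum_{t|p_t^->0}\sum_i(S_{i,t}-S_{i,t-1})$, which is linear in the decision variables. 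Problem (\ref{esmodel2}) is therefore a parametric linear program in which the QC parameter enters linearly in the right-hand side of a subset of constraints, and the added coupling constraints leave the objective linear. Appealing to the standard parametric-LP result \cite{bertsimas1997introduction}, namely that the optimal value is a pointwise minimum over the finitely many candidate bases of affine functions of the parameter, yields that $v(\overline{S}_i)$ (resp. $v(\overline{x}_i)$, resp. any positive linear combination) is piecewise linear, convex, and continuous, giving the claimed EUE behaviour.

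The main obstacle I anticipate is precisely the linearization step in the third part: unlike the single-storage case, the min operator acts on the aggregate increment $\sum_i(S_{i,t}-S_{i,t-1})$, so I must verify that the coupling constraints carrying $\lambda_{7t},\lambda_{8t}$—not merely the per-device constraints—are what pin the sign of this aggregate and thus force the objective to be linear on the feasible set. Once that sign argument is in place, the remainder is a direct replay of the single-storage parametric-LP machinery. A secondary but minor point is that at the breakpoints of $v$ the MRI must be read as a one-sided derivative, equivalently a subgradient of the convex value function, which does not affect the non-negativity conclusion.
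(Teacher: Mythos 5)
Your proposal is correct and follows essentially the same route as the paper: there, Corollary \ref{coro2} is an immediate consequence of the Envelope-Theorem identities $\partial V_2/\partial \overline{S}_i = -\sum_{t=1}^{T}\lambda_{i,2t}$ and $\partial V_2/\partial \overline{x}_i = -\sum_{t=1}^{T}(\lambda_{i,3t}+\lambda_{i,4t})$, and the paper's proof of the supporting Theorem \ref{multithe} is simply declared ``similar to Theorem \ref{singlethe}.'' Your Parts 1--2 make that implicit argument explicit, including the worthwhile observation that the QC parameters appear only in the per-device constraints, so the cross-charging and aggregate constraints contribute no extra terms to the envelope derivative. Where you genuinely improve on the paper is the linearization in Part 3: the paper claims the objective ``can be reformulated as a linear program via standard auxiliary-variable techniques,'' but since $\min\{0,\cdot\}$ is concave and the problem is a minimization, that epigraph-style reformulation does not work in general; your sign-pinning argument --- surplus periods force $\sum_i(S_{i,t}-S_{i,t-1})\ge 0$ so the summand vanishes, deficiency periods force it $\le 0$ so the summand is the aggregate increment itself --- shows the objective is literally linear on the feasible polytope, which is the correct justification for invoking the parametric-LP machinery. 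One simplification: the ``main obstacle'' you anticipate is even smaller than you fear, because the per-device constraints carrying $\lambda_{i,5t},\lambda_{i,6t}$ already pin the sign of every individual increment and hence of the aggregate, so either those or the coupling constraints carrying $\lambda_{7t},\lambda_{8t}$ suffice for the sign argument.
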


\emph{Remark}: After averaging over multiple Monte Carlo trajectories, the system-level MRI is the average of the scenario-level directional derivatives or subgradients. The non-negative property therefore remains valid, and the perturbation study in Section~\ref{sec6} is used to verify the dual MRI numerically.

\section{Discussion} \label{sec5}
\subsection{Qualified Capacity}
Under the MRI capacity accreditation scheme, the definition of QC is relatively flexible. For instance, defining QC as the storage's full power capacity yields the same accreditation result as defining it as half of the storage's power capacity.
\begin{theorem}[Invariant QMRIC] \label{sameqmric}
Let $QC_1$ and $QC_2$ represent the capacity qualification values derived from two different capacity qualification methods, and different qualifications do not change the dispatch models of storage resources in RAA. Suppose there exists a constant $\alpha \neq 0$ such that $QC_1 = \alpha QC_2$ for all different storage capacities. After applying marginal reliability impact correction, the resulting qualified MRI-based capacity values $QMRIC_1$ and $QMRIC_2$ are identical, i.e. $QMRIC_1 = QMRIC_2$.
\end{theorem}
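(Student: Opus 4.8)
The plan is to unwind the three definitions from Section~\ref{sec2} and show that the factor $\alpha$ relating the two qualification conventions cancels exactly between the $\text{QC}$ prefactor and the $\text{MRI}$ derivative. First I would expand, for $j \in \{1,2\}$,
\[
\text{QMRIC}_j \;=\; QC_j \times \text{rMRI}_j \;=\; QC_j \times \frac{\text{MRI}_j}{\text{MRI}_\text{perfect}}.
\]
The central observation is that, by hypothesis, the two qualification methods do \emph{not} alter the storage dispatch model in the RAA; hence the physical EUE is one and the same function of the underlying physical storage, and $QC_1$, $QC_2$ are merely two linearly rescaled labels ($QC_1 = \alpha QC_2$) for that single physical axis.

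Writing $\text{EUE}$ as a function of each label, $E_1(QC_1)$ and $E_2(QC_2)$ with $E_1(QC_1) = E_2(QC_1/\alpha)$, the chain rule gives
\[
\text{MRI}_1 \;=\; -\frac{d E_1}{d QC_1} \;=\; -\frac{1}{\alpha}\,\frac{d E_2}{d QC_2} \;=\; \frac{1}{\alpha}\,\text{MRI}_2 .
\]
Since the perfect reference resource is a fixed benchmark whose accreditation is defined independently of the storage QC convention, $\text{MRI}_\text{perfect}$ is identical in the two cases. Substituting back,
\[
\frac{\text{QMRIC}_1}{\text{QMRIC}_2} \;=\; \frac{QC_1}{QC_2}\cdot\frac{\text{MRI}_1}{\text{MRI}_2} \;=\; \alpha \cdot \frac{1}{\alpha} \;=\; 1,
\]
so $\text{QMRIC}_1 = \text{QMRIC}_2$, as claimed.

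The hard part will not be the algebra but justifying the two premises underpinning the chain rule. The first is that $\text{MRI}_\text{perfect}$ truly is invariant: I would argue that rescaling the storage QC convention leaves the reference thermal resource, and the ICR operating point at which the derivative is evaluated, untouched, so the denominator of rMRI is common to both. The second subtlety is differentiability, since Theorems~\ref{singlethe} and~\ref{multithe} only guarantee that EUE is piecewise linear in QC and the derivative may fail to exist at kink points. I would handle this by noting that a horizontal rescaling $QC_1 = \alpha QC_2$ maps kink locations to kink locations consistently, so the scaling relation $\text{MRI}_1 = \alpha^{-1}\text{MRI}_2$ holds identically for the left- and right-hand derivatives on each affine piece; the cancellation above is therefore robust whether or not the ICR lands on a kink.
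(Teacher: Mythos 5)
Your proposal is correct and follows essentially the same route as the paper: expand $\text{QMRIC} = QC \cdot \text{MRI}/\text{MRI}_\text{perfect}$ and apply the chain rule so that the factor $\alpha$ in $QC_1 = \alpha QC_2$ cancels against the $1/\alpha$ arising in $\partial\text{EUE}/\partial QC_1$. Your additional care about the invariance of $\text{MRI}_\text{perfect}$ and about one-sided derivatives at the kinks of the piecewise-linear EUE is a welcome tightening of details the paper's one-line chain-rule computation leaves implicit, but it is not a different argument.
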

\begin{proof}
Based on the chain rule for derivatives, we have
\begin{equation}
\begin{aligned}
QMRIC_1 = &QC_1\frac{\text{MRI}}{\text{MRI}_p} = -\frac{QC_1}{\text{MRI}_p}\frac{\partial \text{EUE}}{\partial QC_1} \\
 = &-\frac{\alpha QC_2}{\text{MRI}_p}\frac{\partial \text{EUE}}{\partial \alpha QC_2} = QMRIC_2
\end{aligned}
\end{equation}
\end{proof}
Theorem \ref{sameqmric} also holds for other resources, such as thermal and renewable generators. However, for energy storage or energy-limited resources, the energy capacity plays an important part in capacity qualification. If the QC is constituted by both storage power capacity $\Bar{x}$  and energy capacity $\Bar{S}$, their contribution factor plays an important role in QMRIC calculation. Supposing $QC = \beta_1 \Bar{x} + \beta_2 \Bar{S}$, then the QMRIC can be expressed by chain rule as:
\begin{equation}
\begin{aligned}
QMRIC = & QC\frac{\text{MRI}}{\text{MRI}_p} = -\frac{QC}{\text{MRI}_p}\frac{\partial \text{EUE}}{\partial QC} \\
=& -\frac{QC}{\text{MRI}_p}\frac{\partial \text{EUE}}{\partial (\beta_1 \Bar{x} + \beta_2 \Bar{S})}
\end{aligned}
\end{equation}
To make the above expression rigorous, the perturbation path must be specified explicitly. In this paper, a scalar QC perturbation is interpreted as a directional derivative in the \((\Bar{x},\Bar{S})\) plane satisfying \(\beta_1 \Delta \Bar{x} + \beta_2 \Delta \Bar{S} = 1\). In other words, the MRI depends on both the QC rule and the path along which power and energy are changed. Noticing that the unit of $\beta_2$ is 1/hour, how to decide the value of these contribution factors is still a big challenge and unsolved question. A practical heuristic is to use shortage-event statistics: if scarcity is dominated by sharp one-hour events, the operator should place more weight on \(\beta_1\); if multi-hour scarcity blocks dominate, more weight should be assigned to \(\beta_2\). Currently, most US ISOs only use the storage power capacity $\Bar{x}$ to calculate QC.
\subsection{Initial SoC}
In (\ref{esmodel2}), the initial state of charge (SoC) of each storage device is treated as a constant value, independent of the corresponding storage capacity. If the initial SoC of storage device \(D_i\) is assumed to equal its energy capacity, i.e., \(S_{i,0}=\overline{S}_i\), then \(\overline{S}_i\) enters not only the upper energy-capacity bounds but also the initial-condition term and the first-interval state-transition relation. Consequently, the derivative \(\partial V_2/\partial \overline{S}_i\) acquires additional first-interval dual terms.

The exact expression therefore depends on the sign of the first-interval net power signal \(P_1\), but the qualitative conclusion is unchanged: increasing \(\overline{S}_i\) still relaxes the feasible set and cannot increase EUE. Hence the non-negative MRI property established in Theorem~\ref{multithe} remains valid when the initial SoC is tied to energy capacity.

Different from SoC-dependent bidding in the energy market. In a multi-year forward capacity market study with many simulated 8760-hour trajectories, there is no unique physical ``first hour'' from which the starting SoC can be observed. Consistent with industry practice \cite{isone_ge_mars_tech_session2_2025}, we treat all storage initial-SoC as full and leave SoC-dependent capacity accreditation as a future market-design question.

\subsection{Round-trip Efficiency}
The round-trip efficiency (RTE) of an energy storage system is defined as the ratio of total energy output to total energy input. While various approaches exist for modeling RTE in resource adequacy analysis \cite{stephen2022impact,evans2019minimizing} or capacity accreditation \cite{wang2021crediting,qi2024capacity}, the impact of RTE on MRI has received relatively little attention.

In practice, it is common to decompose RTE into charging and discharging efficiencies. For many storage technologies used in RA studies, the charging efficiency is typically around \(85\%\), while the discharging efficiency is often close to 1. Therefore, similar to the commercial software GE MARS \cite{GEMARS}, we model efficiency loss only on the charging side and interpret \(\eta_i\) below as the effective charging efficiency of storage device \(D_i\). This simplification preserves the linear structure of the dispatch model while capturing the dominant energy-loss channel.

Let \(x_{i,t}\) denote discharge power delivered to the system and let \(y_{i,t}\) denote charging power drawn from the system. Then the charging/discharging part of (\ref{esmodel2}) is replaced by:

\begin{equation} \label{multirte}
\begin{aligned}
& 0 \le x_{i,t} \le \overline{x}_i,\qquad 0 \le y_{i,t} \le \overline{x}_i, \\
& S_{i,t} = S_{i,t-1} + \eta_i y_{i,t} - x_{i,t}, \\
& \sum_{i=1}^{N_s} y_{i,t} \le P_t^+, \\
& P_t^- - \sum_{i=1}^{N_s} x_{i,t} \le s(t), \\
& s(t) \ge 0
\end{aligned}
\end{equation}
\indent Together with the unchanged energy-capacity bounds \(0 \le S_{i,t} \le \overline{S}_i\), (\ref{multirte}) captures charging-side efficiency loss under the same \(P_t\)-based notation. In Algorithm~\ref{alg:dis}, the discharging branch is unchanged because the discharging efficiency is approximated by 1, whereas the charging branch must replace the one-interval candidate level by
\[
l_i''(t-1)=\min\{l_i(t-1)+\eta_i,\overline{l}_i\},
\]
so the charging step becomes
\[
u_i(t)=-\max\{\min\{\overline{L}(t)-l_i(t-1),\eta_i,\overline{l}_i-l_i(t-1)\},0\}.
\]
Equivalently, the full-horizon optimal dispatch model is modified by replacing the charging part of (\ref{esmodel2}) with (\ref{multirte}). The dual-variable accreditation method remains workable because the resulting problem is still a parametric linear program in \((\overline{S}_i,\overline{x}_i)\). Hence the optimal dual multipliers of the modified LP still provide valid directional derivatives or subgradients for MRI-based accreditation, although their numerical values will generally change with \(\eta_i\).

\subsection{Scope, Limitations, Market Behavior, and Network Extension}
The analytical reformulation should be read as an implementation of a marginal accreditation rule \emph{within} a given resource-adequacy model, not as a replacement for probabilistic RAA. Its accuracy still depends on the modeled distributions of load, renewable output, forced outages, weather correlations, transmission availability, fuel availability, storage duration, efficiency, initial SoC, and the capacity-market reference point. The storage reliability dispatch model used in this paper is the only widely used storage dispatch model in industry software. A market-oriented accreditation study would additionally require explicit assumptions on prices, bidding strategies, imperfect foresight, and performance penalties. Those assumptions are important, but they are market-design choices rather than intrinsic properties of storage reliability. Similarly to the social-welfare benchmark for storage in \cite{sioshansi2014energy}, the present model should be interpreted as an adequacy benchmark rather than as a prediction of realized merchant storage behavior.

The dual construction is nevertheless not tied to one dispatch philosophy. If an ISO adopts an alternative dispatch assumption and that rule can be represented as an optimization problem, the corresponding dual variables can be used to recover a dispatch-consistent MRI. This includes extensions in which storage must reserve part of its charging or discharging capability for ancillary services: such requirements can be modeled as reduced effective headroom or additional reserve-holding constraints, and the same dual logic continues to apply as long as the resulting adequacy benchmark remains an optimization problem. The same parametric formulation can also evaluate candidate storage resources that are not yet deployed by specifying their hypothetical power and energy capacities before they enter the fleet.

The computational scaling advantage also becomes clearer at larger system size. The perturbation benchmark scales with the number of resources times the number of Monte Carlo scenarios, whereas the dual method requires one solve per scenario for the entire storage fleet. Scenario decomposition is therefore embarrassingly parallel. If a DC network model or regional power transfer limit is added, the same logic extends to locational MRI: congestion changes the feasible dispatch region and the relevant dual variables become nodal reliability shadow prices.

\section{Case Study} \label{sec6}
\subsection{System Description}
The modified 2019 California system is used as the test case, with input data primarily sourced from the U.S. Energy Information Administration (EIA) \cite{eia860}. The open-source Python package \emph{assetra} is used to generate the net capacity surplus profile based on Monte Carlo resource adequacy simulations. Further details on the data and simulation settings can be found in \cite{assetra}.

In the absence of energy storage, net capacity surplus profiles for the system are generated from 100 independent simulations. The results for the summer peak month are shown as light grey curves in Fig. \ref{netcap}.
\begin{figure}[H]
\centering
\includegraphics[scale=0.5]{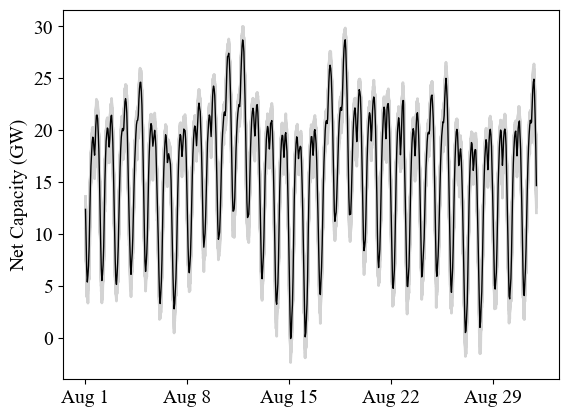}
  \caption{The system net capacity surplus without considering energy storage (August)}
  \label{netcap}
\end{figure}
Because California is a summer peak system, most energy-deficient hours are concentrated in July and August. The black line represents the mean value of the net capacity surplus, where the expected unserved energy without energy storage for the whole year is 1710.8 MWh.

\subsection{Energy Storage Capacity Accreditation}
Based on EIA data, about 1500 MW storage with 1800 MWh energy capacity was operated in California during 2019 \cite{eia860}. These energy storage devices are operated by 118 different utility companies, while over 50 \% of the storage capacity is less than 3 MW. For simplicity but to keep the main character of the storage, we aggregate the storage into 4 groups based on their duration, which is illustrated in Table \ref{groupcap}.
\begin{table}[!t] 
\caption{The capacity of aggregated energy storage groups}
\label{groupcap}
\centering
\begin{tabular}{cccc} 
\toprule
\multirow{2}{*}{\begin{tabular}[c]{@{}c@{}}\textbf{Group}\\  \# \end{tabular}}
& \multirow{2}{*}{\begin{tabular}[c]{@{}c@{}}\textbf{Power Capacity}\\  Aggregated (MW)\end{tabular}}
&\multirow{2}{*}{\begin{tabular}[c]{@{}c@{}}\textbf{Energy Capacity}\\  Aggregated (MWh)\end{tabular}}
&\multirow{2}{*}{\begin{tabular}[c]{@{}c@{}}\textbf{Duration Range}\\  (Hour)\end{tabular}}  \\
\\ \midrule
1&                 1127.6 &                      229.3 &              (0, 1] \\
2&                  284.4 &                      316.9 &              (1, 2] \\
3&                   82.5 &                      188.8 &              (2, 4] \\
4&                  84.1 &                     1084.2 &       $>$ 4 \\
\bottomrule
\end{tabular}
\end{table}
Under the reliability dispatch rule, the energy storage is dispatched to minimize the system's unserved energy. Supposing the storage's initial SoC is equal to its maximum energy capacity and no round trip loss, the value of EUE for the whole year after considering energy storage is reduced to 481.2 MWh.

Different from the \emph{average} reliability impact methods \cite{nyiso2016,pjm2022}, where the average EUE (or other reliability metrics) contribution is used for capacity accreditation, the MRI framework will calculate the \emph{marginal} reduction of EUE by each energy storage resource. Because the resource adequacy analysis is based on the simulation of complex stochastic discrete event systems, the MRI is typically calculated by the tedious perturbation method \cite{ISONE2024}. For example, the engineer will monitor the changing of EUE by adding 1 MW power capacity or 1 MWh energy capacity of the original storage resource as its marginal contribution to system reliability. Table \ref{MRIperturb} compares the results of this perturbation analysis of different aggregated storage resources.
\begin{table}[!t] 
\caption{The MRI of aggregated energy storage through perturbation analysis}
\label{MRIperturb}
\centering
\begin{tabular}{ccc} 
\toprule
\multirow{2}{*}{\begin{tabular}[c]{@{}c@{}}\textbf{Group}\\  \# \end{tabular}}
& \multirow{2}{*}{\begin{tabular}[c]{@{}c@{}}\textbf{MRI} (hour/year)\\  Power Capacity \end{tabular}}
&\multirow{2}{*}{\begin{tabular}[c]{@{}c@{}}\textbf{MRI} (1/year)\\  Energy Capacity \end{tabular}} \\ 
\\ \midrule
1&                 0 &                      1.02 \\
2&                  0.81 &                      0.21 \\
3&                   1.21 &                      0.01 \\
4&                  1.24 &                     0\\
\bottomrule
\end{tabular}
\end{table}

For shorter-duration storage, increasing energy capacity has a greater impact on system reliability, whereas for longer-duration storage, power capacity has a higher MRI. This distinction leads to varying rMRI values across different storage types.

\textbf{\emph{Remark}}: It is important to note that storage with a duration of less than one hour \emph{always} has an MRI of 0 when QC is defined by its power capacity. This occurs because the time step in RA simulations is typically set to 1 hour, preventing the capture of \emph{intra-hour} behavior of some resources \cite{Stephen2024}. This highlights the need for either higher-resolution RA simulations or the inclusion of energy capacity in QC definitions to enable more accurate capacity accreditation in the future.

\begin{figure}[H]
\centering
\includegraphics[scale=0.335]{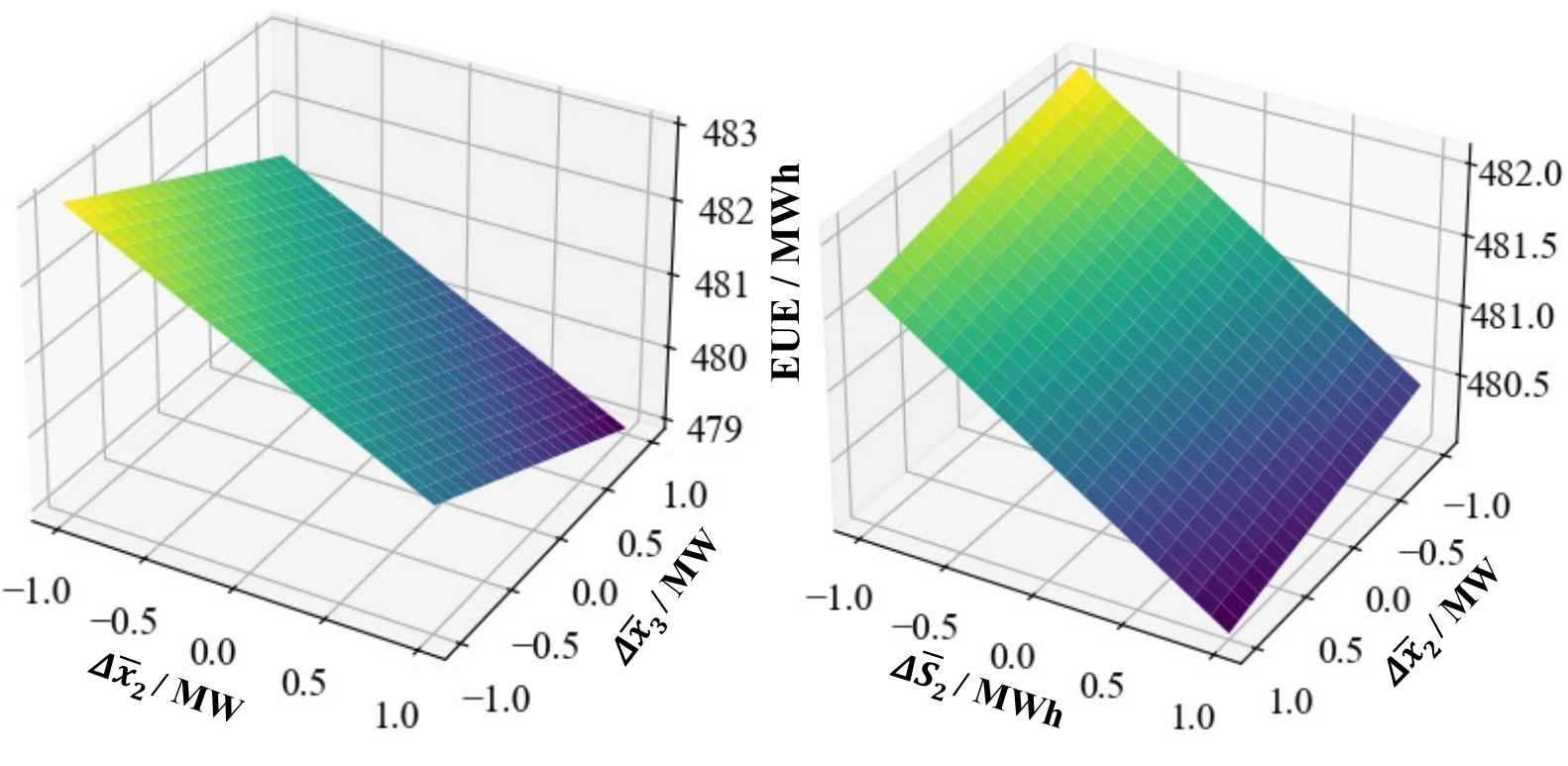}
  \caption{The value of EUE under the perturbation of different power and energy capacities}
  \label{perts}
\end{figure}

To further validate the correctness of Theorem \ref{multithe}, we continuously perturb the system with smaller step sizes for storage capacities, specifically 0.1 MW or 0.1 MWh. Fig. \ref{perts} left presents the EUE simulation results when simultaneously varying the power capacity of two storage units, while Fig. \ref{perts} right shows the results when both the energy and power capacities of the same storage unit are perturbed together.

Ideally, the EUE function is piecewise linear with respect to the QC of storage, but the interval length of each segment is typically much larger than the storage capacity. Consequently, the MRI values presented in Table \ref{MRIperturb} are the \emph{same} as the slope of the EUE function, as illustrated in Fig. \ref{perts}.

We additionally validate the dual MRI on the same frozen 100-scenario Monte Carlo sample used for the main California benchmark. Unless noted otherwise, these validation runs use a 1-hour simulation step, a fully charged initial state, round-trip efficiency of 1.0, and perturbation steps of $\Delta \Bar{x}=1$ MW and $\Delta \Bar{S}=1$ MWh.

Table \ref{MRIaccuracy} shows that the dual-based MRI reproduces the perturbation MRI to machine precision for both power and energy dimensions, while requiring only one LP solve per scenario instead of one perturbation run per resource and per capacity dimension. Because the same 100-scenario sample is used throughout, the values in Table \ref{MRIaccuracy} coincide with the original perturbation values in Table \ref{MRIperturb}.
\begin{table}[H]
\caption{Dual-versus-perturbation MRI on the frozen 100-scenario benchmark}
\label{MRIaccuracy}
\centering
\begin{tabular}{ccccc}
\toprule
\textbf{Group} & \textbf{Power MRI} & \textbf{Power MRI} & \textbf{Energy MRI} & \textbf{Energy MRI} \\
\# & \textbf{(Dual)} & \textbf{(Pert.)} & \textbf{(Dual)} & \textbf{(Pert.)} \\
\midrule
1 & 0.00 & 0.00 & 1.02 & 1.02 \\
2 & 0.81 & 0.81 & 0.21 & 0.21 \\
3 & 1.21 & 1.21 & 0.01 & 0.01 \\
4 & 1.24 & 1.24 & 0.00 & 0.00 \\
\bottomrule
\end{tabular}%
\end{table}

Table \ref{MRItime} compares the MRI solving time for all storage resources\footnote{All the problems are solved using 18 GB memory on the Apple M3 Pro. The mathematical models were formulated using gurobipy on Python and solved using Gurobi v11.0.}. While the perturbation-based method needs to solve the EUE of each capacity-changing scenario for different storage, our method only requires solving one optimization problem per scenario for the entire storage fleet.
\begin{table}[H]
\caption{Runtime comparison on the frozen 100-scenario benchmark}
\label{MRItime}
\centering
\begin{tabular}{ccc}
\toprule
\textbf{Method} & \textbf{Total Time (s)} & \textbf{Time/Scenario (s)} \\
\midrule
Perturbation & 223.40 & 2.23 \\
Dual LP & 89.04 & 0.89 \\
\bottomrule
\end{tabular}%
\end{table}

On the frozen 100-scenario benchmark, the dual formulation reduces wall-clock time from 223.40 s to 89.04 s while also removing the step-size tuning burden from the MRI calculation. Measured on a per-scenario basis, this is a reduction from 2.23 s/scenario to 0.89 s/scenario, i.e., speedup of about \(2.51\times\). In practical large-scale accreditation studies, the Monte Carlo sample size \(N_s\) may exceed \(10^5\), where even a simple linear extrapolation already implies a runtime of about 62 h for perturbation versus about 25 h for the dual method on the same hardware.

Another benefit of the proposed accreditation method is that it is independent of step size, whereas step size is critical in perturbation-based analysis. If the step size is too large, adjacent piecewise segments may overlap; if it is too small, the resulting MRI values may still be inaccurate \cite{qiancredit}. At present, step-size selection relies heavily on engineering judgment, but Corollary \ref{coro2} provides a direct way to calculate MRI for energy storage resources.

\subsection{Guidance for System Operators on Storage Modeling}
Under the MRI capacity accreditation framework, the discount factor (rMRI) discriminates to different energy storage types because of the dispatch rule. In reality, the charging and discharging orders are not decided by the left duration under the reliability dispatch rule but decided by the energy market clearing results. However, the market behavior of storage is hard to model, especially under long-term planning problems.\\ 
\indent Except for the reliability dispatch rule, the system operator might set the dispatch priority based on other factors, such as the storage operation cost or historical performance. For example, assuming a dispatch rule \emph{always} follows the simple charging and discharging priority: storage 1 $>$ storage 2 $>$ storage 3 $>$ storage 4. Based on the same Monte Carlo experiment data, the reliability contribution of all energy storage is the average reduction of EUE to 492.2 MWh, which is less than the reliability dispatch method.

On the other hand, let the MRI of a thermal generator with zero outage rate be the perfect MRI. Under the reliability dispatch model, the perfect MRI is 1.24 MWh/MW, increasing to 1.27 MWh/MW under the simple dispatch model. These two factors make the storage MRI smaller while $\text{MRI}_\text{perfect}$ larger, together resulting in a smaller capacity discount factor (rMRI) for energy storage under a simple dispatch rule.

If QC is defined as the power capacity of storage, Table \ref{rMRI1} compares the impact of dispatch rules on rMRI, where the energy storage is given less credited capacity under the simple dispatch rule.  

\begin{table}[!t] 
\caption{The rMRI calculation results based on different dispatch rules (QC = $\Bar{x}$)}
\label{rMRI1}
\centering
\begin{tabular}{ccc} 
\toprule
\textbf{Group} & \multicolumn{2}{c}{\textbf{rMRI} ($\text{MRI}/\text{MRI}_{\text{perfect}}$)}                            \\
 \#                         & Reliability Dispatch & Simple Dispatch 
\\ \midrule
1&                 0 &                      0 \\
2&                  0.65 &                      0.64 \\
3&                   0.98 &                      0.92 \\
4&                  1.00 &                     0.94\\
\bottomrule
\end{tabular}
\end{table}

To better reflect the reliability contribution of storage with short duration, other definitions of QC might be used by system operators. For example, the QC in Table \ref{rMRI2} is defined as a combination of storage power and energy capacity, but a similar capacity accreditation shrinking phenomenon is also witnessed after changing to a simple dispatch rule.

\begin{table}[!t] 
\caption{The rMRI calculation results based on different dispatch rules (QC = 0.5 $\Bar{x}$ +  0.5 $\Bar{S}$ / duration)}
\label{rMRI2}
\centering
\begin{tabular}{ccc} 
\toprule
\textbf{Group} & \multicolumn{2}{c}{\textbf{rMRI} ($\text{MRI}/\text{MRI}_{\text{perfect}}$)}                            \\
 \#                         & Reliability Dispatch & Simple Dispatch 
\\ \midrule
1&                 0.16 &                      0.16 \\
2&                  0.84 &                      0.82 \\
3&                   1.00 &                      0.94 \\
4&                  1.00 &                     0.94\\
\bottomrule
\end{tabular}
\end{table}
An interesting finding from the above results is that, under the simple dispatch rule, storage 1 consistently has the highest priority for charging and discharging. However, its capacity accreditation, after being adjusted by the rMRI, does not increase and may even decrease. In fact, the role of $\text{MRI}_\text{perfect}$ is critical in the MRI-based accreditation scheme. From a game theory perspective, maximizing the overall reliability impact of the entire resource type may be more effective for achieving higher credit capacity than targeting the reliability impact of individual resources. This principle could also serve as a guide for system operators when modeling other types of resources.

In addition, the current case is locally insensitive to the tested RTE range, which is not obvious a priori. When the charging efficiency is reduced from 0.95 to 0.85, the system-level results are unchanged at the reported precision: annual EUE after storage remains 481.2 MWh, the perfect MRI remains 1.24 MWh/MW, and the total power-based QMRIC remains 350.38 MW. The main reason is that deficiency events are relatively rare and are typically preceded by enough surplus hours for recharge. In addition, only the charging leg carries an efficiency penalty, while discharging efficiency is taken to be close to 1. As a result, moderate charging losses do not materially change the marginal adequacy contribution in this particular system.

\subsection{Suggestions for Policymakers on Reliability Criteria}
The ICR plays an important role in establishing the nominal system capacity for both the demand curve and MRI calculation in FCM \cite{zhao2017constructing}. In practice, since the capacity market operates as a forward market, system operators first forecast load profiles for the future capacity delivery year and then determine the corresponding ICR based on the reliability standard and resource mix required by policymakers.

The traditional resource adequacy standard, based on the LOLE of 1-day-in-10 years, is increasingly insufficient for addressing emerging risks. These risks arise from greater variability and uncertainty caused by the evolving resource mix and increasing demand levels. In response, the North American Electric Reliability Corporation (NERC) has recently been evaluating a multi-criteria approach that incorporates metrics such as loss of load hours (LOLH) and normalized expected unserved energy (NEUE) \cite{nerc2023ltra,nae2023report}. Similarly, planners outside North America facing comparable resource adequacy challenges have adopted different strategies. For example, the Australian National Energy Market Reliability Panel uses an NEUE threshold of less than or equal to 0.002 \% per year as its reliability standard \cite{epri2022report}.
\begin{figure*}[t]
\centering
\includegraphics[scale=0.56]{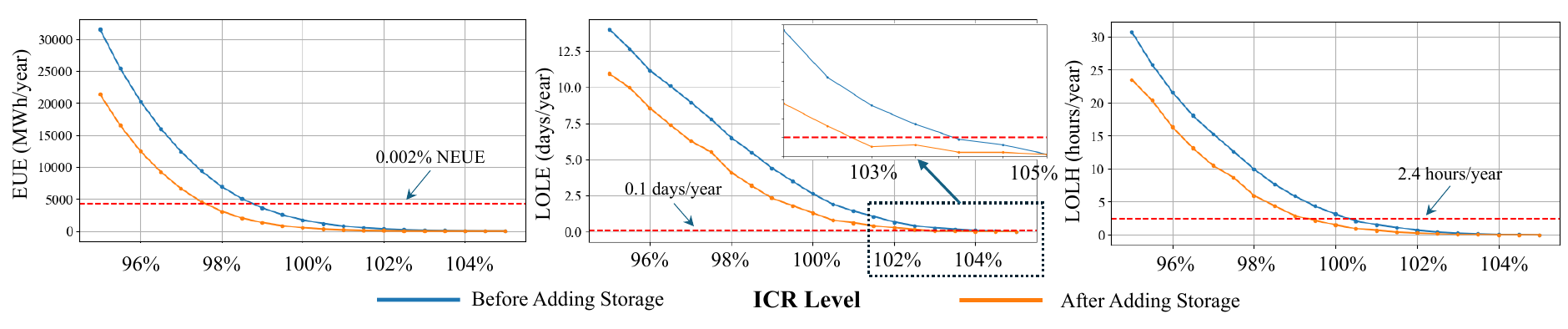}
  \caption{System reliability under different metrics and ICR levels}
  \label{icrmetrics}
\end{figure*}
For the same resource mix, Fig. \ref{icrmetrics} illustrates the values of various reliability criteria across different ICR levels, with the baseline in our test case derived from historical installed capacity and load data from California. It is clear that all the reliability criteria are non-increasing as more capacity is installed, but each criterion measures different risk aspects, with preferences toward the duration, magnitude, or frequency of load loss. The red dotted lines in Fig. \ref{icrmetrics} represent commonly used reliability standards: 0.1 days/year for LOLE, 2.4 hours/year for LOLH, and 0.002 \% energy/year for EUE. While the baseline ICR meets the LOLH and EUE requirements with a margin, an additional 3\% capacity is needed to satisfy the LOLE standard.

Policymakers' choice of reliability standards significantly impacts energy storage capacity accreditation, which is highly sensitive to ICR. Figure \ref{rmriicr} illustrates the rMRI values under different QC definitions and varying ICR levels. As ICR increases, both the system's unserved energy hours and magnitude decrease, indicating that the reliability contribution of energy storage becomes less constrained by its energy capacity. This reduction in the MRI gap compared to a perfect thermal resource leads to an increase in storage rMRI as ICR continues to rise.

\begin{figure}[!t]
\centering
\includegraphics[width=\columnwidth]{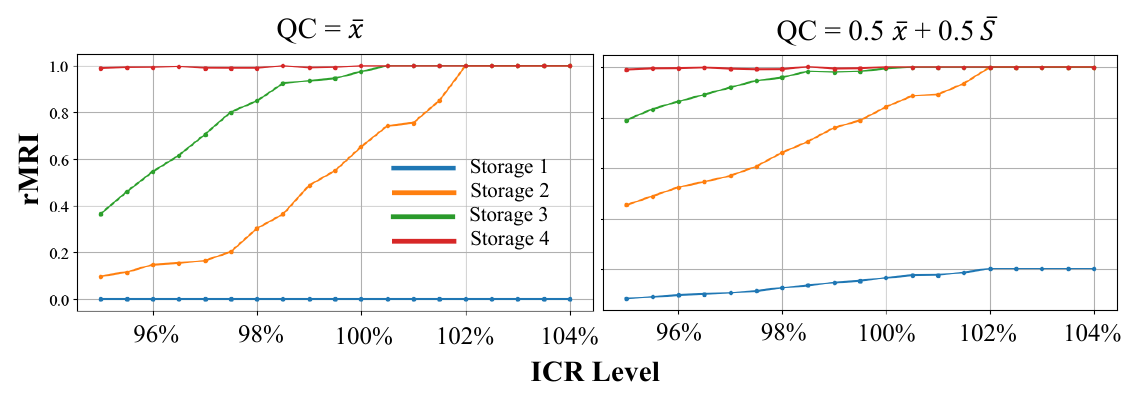}
  \caption{The storage rMRI under different ICR levels}
  \label{rmriicr}
\end{figure}

In addition to selecting more robust reliability metrics, policymakers should study core storage-modeling assumptions such as charging efficiency conventions and temporal resolution. The California case study shows that efficiency losses may have little effect when scarcity events are rare and recharge opportunities are abundant, whereas time resolution can materially change accredited capacity when some effective storage durations are shorter than one hour.

\section{Conclusion} \label{sec7}
This paper presents a tractable MRI framework for energy-storage capacity accreditation by representing the adequacy-oriented storage dispatch rule as a linear program and deriving valid dual variables directly from that formulation. First, we clarify the theoretical scope of the LP reformulation by proving its equivalence to the industry-used storage dispatch algorithm. Second, our method enables direct and efficient calculation of the MRI, bypassing computationally intensive perturbation analysis. Third, we provide explicit numerical validation against the perturbation benchmark and examine sensitivity results in a modified California case study.

The findings deepen the understanding of energy storage's marginal reliability impact and the factors affecting capacity accreditation, including qualified capacity definitions and dispatch rules, making these insights particularly relevant to ongoing marginal accreditation reforms in U.S. ISOs. As the energy landscape continues to evolve with more renewable and storage resources, this methodology provides a foundation for improving market mechanisms, ensuring fair and precise accreditation, and supporting policy decisions.

Future research will focus on enhancing storage aggregation methods with attention to simulation step size, incorporating additional factors such as battery degradation, and exploring the relationship between the MRI of storage and renewable resources to further improve accuracy and applicability.

\section*{Disclaimer}
The views expressed in this paper are solely those of the authors and do not necessarily represent those of ISO New England, PJM Interconnection L.L.C. or its Board of Managers.

\bibliographystyle{IEEEtran}
\bibliography{ref}
\end{document}